\DeclareMathOperator{\gal}{Gal}
\DeclareMathOperator{\AGL}{AGL}
\theoremstyle{definition}
\newtheorem{definition}{Definition}[section]
\newtheorem{remark}[definition]{Remark}
\newtheorem*{remark*}{Remark}
\theoremstyle{plain}
\newtheorem{theorem}[definition]{Theorem}
\newtheorem{lemma}[definition]{Lemma}
\newtheorem{proposition}[definition]{Proposition}
\newcommand{\F}{{ \mathbb F }}
\newcommand{\N}{{ \mathbb N }}
\author[A. Dukes]{Austin Dukes}
\address{University of South Florida\\
4202 E Fowler Ave\\
33620 Tampa, US.}
\email{austindukes@usf.edu}
\author[A. Ferraguti]{Andrea Ferraguti}
\address{Scuola Normale Superiore, Piazza dei Cavalieri 7, 56126 Pisa}
\email{andrea.ferraguti@sns.it}
\author[G. Micheli]{Giacomo Micheli}
\address{University of South Florida\\
4202 E Fowler Ave\\
33620 Tampa, US.
}
\email{gmicheli@usf.edu}
\title{Optimal Selection for Good polynomials of degree up to five}
\keywords{Good polynomials; Monodromy groups; Global function fields.}
\subjclass[2020]{11T06}
\begin{document}

\begin{abstract}
Good polynomials are the fundamental objects in the Tamo-Barg constructions of Locally Recoverable Codes (LRC). In this paper we classify all good polynomials up to degree $5$, providing explicit bounds on the maximal number $\ell$ of sets of size $r+1$ where  a polynomial of degree $r+1$ is constant, up to $r=4$. This directly provides an explicit estimate (up to an error term of $O(\sqrt{q})$, with explict constant) for the maximal length and dimension of a Tamo-Barg LRC. Moreover, we explain how to construct good polynomials achieving these bounds. Finally, we provide computational examples to show how close our estimates are to the actual values of $\ell$, and we explain how to obtain the best possible good polynomials in degree $5$.
\end{abstract}

\maketitle

\section{Introduction}

A code $\mathcal{C}$ with length $n$ and dimension $k$ is called a locally recoverable code (LRC) with locality $r$, or a $(n, k, r)$-LRC, if, for any $v = (v_1, \cdots, v_n) \in \mathcal{C}$ and any $1 \leq i \leq n$, the coordinate $v_i$ is a function of at most $r$ other coordinates $v_{i_1}, v_{i_2}, \ldots, v_{i_r}$ of $v$. In other words, the value of any symbol in a particular codeword can be recovered by accessing at most $r$ other symbols of the codeword.

Given the linear $(n, k, r)$-LRC $\mathcal{C}$,
Gopalan \textit{et al.} \cite{Gop} and
Papailiopoulos and Dimakis \cite{PaDi} proved that the minimum distance $d(\mathcal{C})$ of $\mathcal{C}$ satisfies the upper bound $d(\mathcal{C}) \leq n - k - \lceil k/r \rceil + 2$. As in the literature, we will say $\mathcal{C}$ is an \textit{optimal} LRC if the minimum distance $d(\mathcal{C})$ of $\mathcal{C}$ achieves this bound, that is, if $d(\mathcal{C}) = n - k - \lceil{k/r} \rceil + 2$.

A powerful approach to constructing LRCs was given by Tamo and Barg in \cite{TaBa}, and it can be accomplished by constructing polynomials of degree $r+1$ which are constant on pairwise disjoint subsets of $\mathbb{F}_q$ of size $r+1$. Such polynomials are called \textit{good} polynomials.

More formally, for a nonnegative integer $\ell$ the polynomial $f \in \mathbb{F}_q[X]$ is said to be $(r, \ell)$-good if
\begin{itemize}
\item the degree of $f$ is $r+1$, and
\item there are pairwise disjoint sets $A_1, \ldots, A_{\ell} \subseteq \mathbb{F}_q$, each of cardinality $r+1$, such that $f(A_i) = \{t_i\}$ for some $t_i \in \mathbb{F}_q$, i.e., $f$ is constant on $A_i$.
\end{itemize}

Given a good polynomial, one can construct an optimal linear LRC as follows (we use the notation of \cite{Mes}, which is the most convenient for our purposes). Fix $r \geq 1$, and let $f(X) \in \mathbb{F}_q$ be a good polynomial. Write $n = (r+1)\ell$ and $k = rt$, where $t \leq \ell$. For $a = (a_{ij} \mid i = 0, \ldots, r-1; j = 0, \ldots t-1) \in \left(\mathbb{F}_q\right)^k$, define encoding polynomials $$f_a(X) = \sum_{i=1}^{r-1} \sum_{j=0}^{t-1} a_{ij} f(X)^j X^i.$$ Let $A = \bigcup_{i=1}^{\ell} A_i$ and define $$\mathcal{C} = \left\{(f_a(x), x \in A) \mid a \in \left(\mathbb{F}_q\right)^k\right\}.$$ Then $\mathcal{C}$ is an optimal linear $(n, k, r)$-LRC code over $\mathbb{F}_q$.

In the rest of the paper we classify all $(r,\ell)$-good polynomials up to $r=4$ as follows:
for any fixed prime power $q$ (even or odd) and a fixed $r$ up to $4$, we provide an explicit estimate (of the form $cq+O(\sqrt{q})$, where $c\in [0,1)$, and the implied constant in the error term is explicitly computable) on the maximal $\ell$ such that a polynomial of degree $r+1$ is $(r,\ell)$-good. Moreover, we provide examples of polynomials achieving these values for $\ell$, showing that the estimate is the best possible. 

The machinery we use involves Galois theory, the classification of transitive subgroups of the symmetric group $\mathcal S_n$ up to $n=5$, and the theory of function fields, using the results and techniques of \cite{micheli2019selection,micheli}, then further developed in \cite{bartoli2021construction, bartoli2020algebraic, bartoli2020r, bartoli2021investigating, permutationFerramicheli, ferraguti2021exceptional}.
\section{Monodromy groups and totally split places}
Let $q$ be a prime power and $f\in \F_q[X]$ with $f\notin \F_q[X^p]$, where $p$ is the characteristic of $\F_q$. Throughout the paper, we will call such polynomials separable, for short. Let $t$ be transcendental over $\F_q$.
\begin{definition}
The \emph{arithmetic monodromy group} of $f$, denoted by $A(f)$, is the Galois group of $f(X)-t$ seen as a polynomial over $\F_q(t)$. The \emph{geometric monodromy group} of $f$, denoted by $G(f)$, is the Galois group of $f(X)-t$ seen as a polynomial over $\overline{\F}_q(t)$. 
\end{definition}
It is easy to see for any non-constant $f$, the polynomial $f-t$ is geometrically irreducible. Hence both the arithmetic and the geometric monodromy groups are isomorphic to transitive subgroups of the symmetric group of degree $\deg f$, after choosing a labeling of the roots of $f-t$ in $\overline{\F_q(t)}$. Different labelings yield conjugate embeddings.

Recall that $G(f)$ is a normal subgroup of $A(f)$, and the quotient is isomorphic to the Galois group of the extension $(M\cap \overline{\F}_q)/\F_q$, where $M$ is the splitting field of $f-t$.

For any fixed $n$, it is possible to construct a polynomial $f$ of degree $n$ having $A(f)=G(f)=S_n$. Hence one can define a function

$$G_n(\cdot)\colon \{\mbox{prime powers}\}\to \N$$
that assigns to every prime power $q$ the least positive integer such that there exists a separable $f\in \F_q[X]$ of degree $n$ with $|G(f)|=|A(f)|=G_n(q)$. Notice that $G_n(q)\geq n$ for every $q$, because a transitive group of degree $n$ must have at least order $n$.

Thanks to the the techniques introduced in \cite{micheli}, given a separable polynomial $f\in \F_q[X]$ such that $A(f)=G(f)$, one can obtain an explicit estimate on cardinality of the set
$$\mathcal T^1_{split}(f)\coloneqq\{t_0\in\F_q\colon f(X)-t_0 \mbox{ splits into $\deg f$ distinct linear factors}\}.$$

This is done via the following result:

\begin{proposition}{{\cite[Proposition 3.1]{micheli}}}\label{split_places}
Let $f\in \F_q[X]$ be a separable polynomial of degree $n$ with $G(f)=A(f)$ and let $g_f$ be the genus of the splitting field of $f(X)-t$. Then
$$\frac{q+1-2g_f\sqrt{q}}{|G(f)|}-\frac{n}{2}\leq \mathcal |T^1_{split}(f)|\leq \frac{q+1+2g_f\sqrt{q}}{|G(f)|}$$
\end{proposition}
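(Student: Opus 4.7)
The plan is to apply the Hasse--Weil bound to the splitting field $M$ of $f(X)-t$ over $\F_q(t)$ and to extract $|\mathcal T^1_{split}(f)|$ from the count of degree-1 places of $M$. The hypothesis $G(f)=A(f)$ ensures that $M$ has $\F_q$ as its exact constant field, since $A(f)/G(f)\cong \gal((M\cap\overline{\F_q})/\F_q)$, so the smooth projective model $X/\F_q$ of $M$ is geometrically integral of genus $g_f$ and Hasse--Weil yields
$$q+1-2g_f\sqrt q\ \leq\ |X(\F_q)|\ \leq\ q+1+2g_f\sqrt q.$$

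Next I would translate membership in $\mathcal T^1_{split}(f)$ into the language of $M$: $t_0\in\mathcal T^1_{split}(f)$ if and only if the place $v=(t-t_0)$ of $\F_q(t)$ is unramified and splits completely in $M$, contributing exactly $|G(f)|$ degree-1 places of $M$ above it. Standard Galois analysis of the decomposition of $v$ in $M$ then exhausts the remaining possibilities: an unramified non-totally-split $v$ has a common residue degree $>1$ above it (by transitivity of the Galois action on the places above $v$) and so contributes $0$ degree-1 places; a ramified finite $v$ has common ramification index $e_v\geq 2$ and contributes at most $|G(f)|/e_v\leq |G(f)|/2$ degree-1 places; and the place $\infty$ is ramified with $e_\infty\geq n$ by total ramification of the polynomial cover, contributing at most $|G(f)|/n$ degree-1 places.

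It remains to bound the ramified finite places, which must lie under the branch locus of $x\mapsto f(x)$, i.e., the critical values of $f$. By separability of $f$, $f'\neq 0$ has degree at most $n-1$, so there are at most $n-1$ such values. Summing, the excess of $|X(\F_q)|$ beyond $|G(f)|\cdot|\mathcal T^1_{split}(f)|$ is bounded by
$$(n-1)\cdot\frac{|G(f)|}{2}+\frac{|G(f)|}{n}\ \leq\ \frac{n}{2}|G(f)|,$$
the last inequality being valid for $n\geq 2$. Combining with Hasse--Weil, the trivial lower bound $|X(\F_q)|\geq |G(f)|\cdot|\mathcal T^1_{split}(f)|$ yields the upper estimate for $|\mathcal T^1_{split}(f)|$, while $|X(\F_q)|\leq |G(f)|\cdot|\mathcal T^1_{split}(f)|+(n/2)|G(f)|$ yields the lower estimate after dividing by $|G(f)|$.

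The main subtlety I anticipate is controlling wild ramification in positive characteristic (especially at $\infty$ when $p\mid n$), but since the argument only uses the purely numerical bounds $e_v\geq 2$ and $e_\infty\geq n$, which are insensitive to wildness, the estimate should hold uniformly; a secondary verification is that the unramified-case dichotomy (all degree-1 places above $v$ or none) is a genuine consequence of Galois transitivity rather than an accident of characteristic zero.
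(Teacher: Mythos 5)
Your proof is correct. The paper states this proposition as a direct citation of \cite[Proposition 3.1]{micheli} and does not reprove it, but your argument---applying Hasse--Weil to the splitting field $M$ (whose exact constant field is $\F_q$ precisely because $G(f)=A(f)$) and then bounding, over each degree-one place of $\F_q(t)$, the contribution to $|X(\F_q)|$ from totally split, unramified non-split, ramified finite, and infinite places---is precisely the standard derivation used in that reference.
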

The genus $g_f$ can be bounded solely in terms of $\deg f$, using for example Castelnuovo's inequality. As noticed in \cite[Proposition 3.3]{micheli}, if $\text{char } \F_q\nmid |G(f)|$ we have $g_f\leq \frac{(n-2)|G(f)|+2}{2}$. 

It is clear from the above proposition that for a fixed $n$, minimizing $|G(f)|$ maximizes the expected number of totally split places, which in turn maximises the dimension of the Tamo-Barg code.

In this paper, we compute the function $G_n$ for every $n\in \{2,\ldots,5\}$. The simpler cases $n=2,3,4$ are completely treated in Section \ref{small_degrees}. When $n=5$ the problem becomes more difficult as, up to conjugation, there are 5 transitive subgroups of the symmetric group $S_5$:

\begin{itemize}
\item The cyclic group $C_5$, generated by a $5$-cycle;
\item The dihedral group $D_5$, generated by a $5$-cycle and a product of two disjoint transpositions;
\item The affine general linear group $\AGL_1(\F_5)$, isomorphic to $C_5\rtimes C_4$, generated by a $5$-cycle and a $4$-cycle.
\item The alternating group $A_5$;
\item The symmetric group $S_5$.
\end{itemize}

Nevertheless, we are able to prove the following theorem for good polynomials of degree $5$.
\begin{theorem}\label{main_thm}
Let $q$ be a prime power. Then:

$$G_5(q)=\begin{cases}5 & \mbox{if }5\mid q(q-1)\\
				10 & \mbox{if }5\mid q+1\\
				120 &\mbox{otherwise}\\\end{cases}.$$
\end{theorem}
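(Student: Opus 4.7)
The plan is to analyze, for each of the five transitive subgroups $G$ of $S_5$ (of orders $5, 10, 20, 60, 120$), the conditions on $q$ under which a separable polynomial $f\in\F_q[X]$ of degree $5$ achieves $G(f)=A(f)=G$. Upper bounds will come from explicit constructions, while lower bounds will come from group-theoretic and Riemann--Hurwitz obstructions constraining the arithmetic and geometric Galois structure of $f$.

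For the upper bounds I would exhibit the following polynomials. When $p=5$, the Artin--Schreier polynomial $f(X)=X^5-X$ yields $G(f)=A(f)=C_5$ via the cyclic splitting of $X^5-X-t$ over $\F_5(t)$. When $p\ne 5$ and $5\mid q-1$, take $f(X)=X^5$, which gives $G(f)=A(f)=C_5$ since $\zeta_5\in\F_q$. When $5\mid q+1$, take a Dickson polynomial $f(X)=D_5(X,a)$ for suitable $a\in\F_q^*$: the substitution $X=y+a/y$ identifies $G(f)=D_5$, and the hypothesis $q\equiv -1\pmod 5$ forces $\zeta_5+\zeta_5^{-1}\in\F_q$ (since Frobenius inverts $\zeta_5$), so the dihedral action descends to $\F_q$ and $A(f)=D_5$. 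In the remaining case, $G_5(q)\le 120$ is witnessed by any polynomial with full $S_5$ monodromy.

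For the lower bounds I would rule out each transitive subgroup whose order is smaller than the claimed $G_5(q)$. If $G(f)=C_5$ in characteristic $\ne 5$, the classification of tame cyclic degree-$5$ covers of $\mathbb{P}^1$ forces $f$ to be $\F_q$-linearly equivalent to $X^5$, so $A(f)=C_5$ requires $\zeta_5\in\F_q$. A parallel classification shows $G(f)=D_5$ forces $f$ to be equivalent to a Dickson polynomial, and $A(f)=D_5$ requires $\zeta_5+\zeta_5^{-1}\in\F_q$. For $\AGL_1(\F_5)=C_5\rtimes C_4$, in characteristic $\ne 2,5$ a Riemann--Hurwitz argument rules it out entirely: the branch cycle at $\infty$ being a $5$-cycle contributes $4$ to the ramification, leaving only $4$ units for the other branch cycles of a tame genus-$0$ cover, which must therefore be two involutions; but two involutions of $\AGL_1(\F_5)$ together with a $5$-cycle generate only a $D_5$-subgroup, so $G(f)=\AGL_1(\F_5)$ is impossible. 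In characteristic $2$, the analogous argument must handle wild ramification with cyclic $C_4$ inertia, where the results of \cite{micheli} (or one of the subsequent papers cited in the introduction) should supply the necessary classification.

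The hardest case will be ruling out $A_5$, which is simple and so admits no normal subgroup of small index. My plan is to use that $A_5$ contains $D_5$ as the normalizer of a Sylow-$5$ subgroup $P=C_5$: the chain of subfields $M^{D_5}\subset M^P\subset M$ of the splitting field $M$ has $[M^P:M^{D_5}]=2$ and $[M:M^P]=5$, with $\gal(M^P/M^{D_5})=C_2$ acting on $\gal(M/M^P)=C_5$ by inversion. This reproduces the dihedral-over-cyclic structure underlying the $D_5$ case, now over the base $M^{D_5}$; since $A(f)=G(f)=A_5$ forces $M^{D_5}\cap\overline{\F}_q=\F_q$, the obstruction $\zeta_5+\zeta_5^{-1}\in\F_q$ applies again and contradicts $5\nmid q^2-1$. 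The principal difficulty lies in rigorously descending the dihedral structure through the non-Galois intermediate subfield $M^{D_5}$, for which the function-field monodromy framework of \cite{micheli} should be key.
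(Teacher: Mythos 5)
Your upper-bound constructions match the paper exactly ($X^5-X$ in characteristic $5$, $X^5$ when $5\mid q-1$, Dickson polynomials when $5\mid q+1$), and your $C_5$ obstruction is essentially the paper's Lemma~\ref{cyclic_case}. Your Riemann--Hurwitz count for $\AGL_1(\F_5)$ in odd characteristic $\neq 5$ is a genuine and valid alternative to the paper's factorization argument: the tame ramification budget at finite places is $4$, and since $\AGL_1(\F_5)$ contains no transpositions, this can only be filled by a single $5$-cycle or two double transpositions, each of which, together with the $5$-cycle at infinity, generates a subgroup of $D_5$. This is a nice streamlining.

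However, two of the three hard cases have real gaps. First, for $\AGL_1(\F_5)$ in characteristic $2$ the tame Riemann--Hurwitz accounting fails outright: the $4$-cycles and double transpositions in $\AGL_1(\F_5)$ have order divisible by $2$, so the corresponding inertia is wild and the contribution per branch point is no longer $5 - \#(\mathrm{cycles})$. The paper devotes the bulk of Theorem~\ref{AGL} to this case, passing to the completion $\overline{\F}_q((x))$, invoking Hensel's lemma to factor $f-t$ over $\overline{\F}_q((t))$, and deriving a contradiction from an explicit power-series computation of a conjugate root. Gesturing at the references and saying they ``should supply the necessary classification'' is not a proof; this is the case where most of the work lives.

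Second, the $A_5$ obstruction is where your proposal diverges most sharply from the paper, and where the gap is most serious. The paper handles $D_5$ and $A_5$ simultaneously in Theorem~\ref{A5} by a short discriminant computation: for monic degree-$5$ $f$, the discriminant of $f-t$ is $5^5t^4 + \cdots$, so $A(f)\leq A_5$ forces $5$ to be a square in $\F_q$, which forces $5\mid q^2-1$ by quadratic reciprocity (with the Berlekamp discriminant doing the same job in characteristic $2$). Your proposed descent through $M^{D_5}\subset M^P\subset M$ does not obviously carry the needed conclusion: the obstruction $\zeta_5+\zeta_5^{-1}\in\F_q$ for $D_5$-polynomials is a consequence of the genus-zero geometry of the cover (i.e., the classification of tame $D_5$-polynomials as Dickson), and the intermediate curve $M^{D_5}$ for an $A_5$-cover of $\P^1$ is not $\P^1$, so that classification does not transfer. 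You acknowledge ``the principal difficulty lies in rigorously descending the dihedral structure,'' but that is precisely where the argument would have to do all the work, and it is not clear it can be made to. Relatedly, the ``parallel classification'' forcing a $D_5$-monodromy polynomial to be Dickson is asserted but never proved; the paper sidesteps this entirely by the discriminant.
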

As we mentioned above, the estimate of Proposition \ref{split_places} can be made explicit, leading to a formula for the maximal dimension of a Tamo-Barg code of locality $4$.
When $G(f)=A(f)\cong D_5,C_5$ the genus of the splitting field of $f$ is necessarily zero (by Hurwitz formula, for example), and the error term is therefore $O(1)$. In the next theorem we give an explicit estimate for the remaining cases, restricting for simplicity to the case $2,3,5\nmid q$.

\begin{theorem}
Let $q$ be a prime power with $2,3,5\nmid q$. Let $f\in \F_q[X]$ of degree $5$ with $G(f)=A(f)\cong S_5$. Then:
$$\frac{q+1-72\sqrt{q}}{120}-\frac{5}{2}\leq |\mathcal T^1_{split}(f)|\leq \frac{q+1+72\sqrt{q}}{120}.$$
\end{theorem}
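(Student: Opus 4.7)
The plan is to apply Proposition~\ref{split_places} directly with $|G(f)|=120$ (which we have since $G(f)\cong S_5$) and an appropriate bound on the genus $g_f$ of the splitting field $M$ of $f(X)-t$ over $\overline{\F}_q(t)$; substituting $|G(f)|=120$, $n=5$, and $2g_f\le 72$ into the double inequality of Proposition~\ref{split_places} produces the theorem. The content of the argument is therefore to establish the genus bound $g_f\le 36$.

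To obtain this, I would apply the Riemann--Hurwitz formula to the Galois cover $M/\overline{\F}_q(t)$, which has group $S_5$. The hypothesis $2,3,5\nmid q$ is used here to guarantee tame ramification everywhere: every element of $S_5$ has order dividing $\mathrm{lcm}(1,2,3,4,5,6)=60$, so each inertia group is cyclic of order coprime to $\mathrm{char}\,\F_q$. Under tame ramification, a branch point $P$ of $\overline{\F}_q(t)$ with inertia of order $e_P$ contributes exactly $120(1-1/e_P)$ to the degree of the different divisor $\mathfrak{D}_{M/\overline{\F}_q(t)}$, and one has the relation $2g_f-2=-240+\deg\mathfrak{D}_{M/\overline{\F}_q(t)}$.

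The branch points fall into two classes: the finite critical values of $f$ (values $f(\alpha)$ with $f'(\alpha)=0$) and the place at $\infty$. At a critical value $c$, the inertia cycle type equals the multiplicity partition of $f(X)-c$, while at $\infty$ the inertia is a $5$-cycle, since $\deg f=5$ is coprime to $\mathrm{char}\,\F_q$; this gives the fixed contribution $120(1-1/5)=96$ from $\infty$. The constraint $\deg f'=4$ forces $\sum_c (5-L_c)=4$, where $L_c$ is the number of distinct roots of $f(X)-c$ and the sum ranges over the critical values of $f$, and this severely restricts the admissible configurations of cycle types at the finite critical values. A short case analysis across the admissible configurations provides an upper bound on the contribution to $\deg\mathfrak{D}_{M/\overline{\F}_q(t)}$ from the finite branches, and combining with the $96$ at $\infty$ yields the required bound $g_f\le 36$ via Riemann--Hurwitz. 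The main technical obstacle is the combinatorial enumeration of possible ramification patterns---including coalescences of critical values and higher-order critical points, all of which must still be compatible with the monodromy being $S_5$---and verifying the extremal bound uniformly across them; once this is in place, the theorem follows by direct substitution into Proposition~\ref{split_places}.
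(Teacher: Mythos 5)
Your overall strategy --- Riemann--Hurwitz for the Galois cover $M/\overline{\F}_q(t)$, tameness from $2,3,5\nmid q$, a $5$-cycle inertia at infinity contributing $96$, and the constraint $\sum_c(5-L_c)=4$ coming from $\deg f'=4$ --- is the right framework, and it is essentially the paper's argument (the paper runs Riemann--Hurwitz on the degree-$24$ subextension $M/F$ with $F=\overline{\F}_q(x)$, which is equivalent). The gap is that the entire quantitative content of the theorem sits inside the ``short case analysis'' that you assert but do not carry out, and carrying it out does not produce $g_f\le 36$. The extremal configuration is also the generic one: four distinct simple critical values, each with inertia generated by a transposition. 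This is compatible with $S_5$-monodromy (a transitive subgroup of $S_5$ containing a transposition is all of $S_5$) and occurs, e.g., for the paper's own test polynomial $X^2(X^3+X+3)$. Each such critical value contributes $\frac{120}{2}(2-1)=60$ to the degree of the different, so
$$2g_f-2=120(2\cdot 0-2)+96+4\cdot 60=-240+336=96,\qquad g_f=49.$$
Checking the remaining admissible multiplicity partitions at the finite critical values shows $240$ is indeed the maximal finite contribution, so the correct uniform bound from this method is $g_f\le 49$, i.e.\ $2g_f\le 98$, not $72$. Your proposal as written would therefore stall exactly at the step you deferred.

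You should also be aware that the paper's own proof contains the corresponding error: it claims that at most $5$ places of $F$ ramify in $M/F$ because $\deg f'=4$, but a single simple critical value $c$ already yields three places of $F$ (the three simple roots of $f-c$) that are ramified in $M/F$, so generically there are $12$ such places and the different of $M/F$ has degree $144$, not $\le 118$; this reproduces $g_M=49$. Consequently the constant $72$ in the statement is not established by either argument --- what both arguments actually prove is the theorem with $72$ replaced by $98$ --- and obtaining $72$ would require input beyond the Weil bound applied with the true genus.
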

\begin{proof}
Let $x$ be a root of $f(X)-t$. Let $F\coloneqq \overline{\F}_q(x)$ and $M$ be the splitting field of $f-t$ over $\overline{\F}_q(t)$. By Proposition \ref{split_places}, all we have to do is bound the genus $g_M$ of $M$. We will do it via Riemann-Hurwitz applied to the degree $24$ extension $M/F$. Notice that $F$ has genus $g_F$ equal to zero. We have that:
$$2g_M-2=24(2g_F-2)+\sum_{P}\sum_{Q|P}(e_{Q|P}-1),$$
because thanks to our assumptions on $q$ the ramification is tame. Here the external sum is over all places $P$ of $F$, while the internal one is over all places $Q$ of $M$ dividing $P$, and $e(Q|P)$ is the ramification index. Since $f$ has degree $5$, its derivative has degree $4$ and therefore there are at most $5$ places of $F$ that can ramify in $M$ (notice that a place of $\overline{\F}_q(t)$ ramifies in $M$ if and only if it ramifies in $F$). Now since $M/F$ is a Galois extension the ramification index $e(Q|P)$ depends only on $P$, and it is at most $24$. On the other hand, there are at least two places of $F$ that ramify in $M$, since there are at least two places of $\overline{\F}_q(t)$ that ramify in $F$: the infinite place and a finite one, since the derivative of $f$ has positive degree. All in all, we have that
$$\sum_P\sum_{Q|P}(e_{Q|P}-1)= \sum_Pe_{Q|P}-\sum_P\sum_{Q|P}1\leq 5\cdot 24-2=118,$$
and substituting in the above equation yields $g_M\leq 36$.
\end{proof}

\section{Degrees up to $4$}\label{small_degrees}
In this section we compute $G_2,G_3$ and $G_4$. We start with two general lemmas.
\begin{lemma}\label{additive_case}
Let $p$ be a prime and $q=p^m$ for some $m\geq 1$. Let $f=X^q-X\in \F_q[X]$. Then $A(f)=G(f)\cong (C_p)^m$.
\end{lemma}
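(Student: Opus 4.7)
The plan is to analyze the roots of $f(X)-t=X^q-X-t$ directly, identify the splitting field explicitly, and then compute the Galois group as an additive subgroup of $\F_q$.

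First I would observe that if $\alpha$ is a root of $X^q-X-t$ in some algebraic closure, then for every $a\in\F_q$ we have $(\alpha+a)^q-(\alpha+a)=\alpha^q+a-\alpha-a=t$, using $a^q=a$. This produces $q$ distinct roots $\{\alpha+a:a\in\F_q\}$, so these are all the roots. Consequently the splitting field of $f-t$ over $\F_q(t)$ is simply $M=\F_q(t)(\alpha)$, and similarly over $\overline{\F}_q(t)$ it is $\overline{\F}_q(t)(\alpha)$.

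Next I would establish irreducibility of $X^q-X-t$ over $\overline{\F}_q(t)$ (which a fortiori gives irreducibility over $\F_q(t)$). The cleanest route is to note that the polynomial $X^q-X-t\in\overline{\F}_q[X,t]$ is linear, hence irreducible, in the variable $t$; by Gauss's lemma this forces irreducibility in $\overline{\F}_q(t)[X]$. Combined with the previous step, this yields $[M:\F_q(t)]=[\overline{\F}_q(t)(\alpha):\overline{\F}_q(t)]=q$, so both $A(f)$ and $G(f)$ have order $q=p^m$.

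Finally I would identify the group structure. The assignment $\sigma\mapsto \sigma(\alpha)-\alpha$ gives a map from $G(f)$ (or $A(f)$) into $\F_q$, since $\sigma(\alpha)$ is some root $\alpha+a$; this map is easily checked to be an injective group homomorphism into $(\F_q,+)$, where the target is viewed as the elementary abelian group $(C_p)^m$. Since the source has order $q$, the map is an isomorphism. The same argument works verbatim over $\F_q$ and over $\overline{\F}_q$, giving $A(f)=G(f)\cong(C_p)^m$.

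The main conceptual step is the irreducibility of $X^q-X-t$, but since it reduces to the trivial observation that a polynomial linear in $t$ is irreducible, no real obstacle arises; everything else is a direct unwinding of the Artin–Schreier-type structure of the cover $X\mapsto X^q-X$.
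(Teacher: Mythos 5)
Your proof is correct and follows essentially the same path as the paper: the roots of $X^q-X-t$ are exactly $\{\alpha+a:a\in\F_q\}$, so $\F_q(t)(\alpha)$ is already the degree-$q$ splitting field, and the Galois group acts by additive translations. The paper instead concludes the group structure by noting that every $\sigma\in A(f)$ satisfies $\sigma^p=\mathrm{id}$; your use of the explicit injective homomorphism $\sigma\mapsto\sigma(\alpha)-\alpha$ into $(\F_q,+)$ is the cleaner way to nail down that $A(f)\cong(C_p)^m$, since a $p$-group of exponent $p$ need not be elementary abelian in general.
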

\begin{proof}
Let $x$ be a root of $f$ and let $F\coloneqq\F_q(x)$. Then $F$ is a Galois extension of $\F_q(t)$, because $x+\alpha$ is a root of $f(X)-t$ for every $\alpha\in \F_q$, and therefore $|A(f)|=q$. Since $f(X)-t$ is absolutely irreducible, both $A(f)$ and $G(f)$ act transitively on the set of roots of $f(X)-t$ and therefore it must be that $G(f)=A(f)$. If $\sigma\in A(f)$ and $r=x+\alpha$ is a root of $f-t$ for some $\alpha\in \F_q$, then $\sigma(r)=r+\beta$ for some $\beta\in \F_q$, and therefore $\sigma^p(r)=r$. It follows that $\sigma^p$ is the identity, and therefore $A(f)=(C_p)^m$.
\end{proof}

\begin{lemma}\label{cyclic_case}
Let $\ell$ be a prime and $q$ a prime power with $\ell\nmid q$. Let $f\in \F_q(X)$ be a degree $\ell$ polynomial. Then $G(f)=A(f)\cong C_\ell$ if and only if $\ell\mid q-1$ and $f=(X-a)^\ell+b$ for some $a,b\in \F_q$.

\end{lemma}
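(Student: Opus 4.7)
The plan is to handle the two directions separately, using explicit Kummer theory for one direction and Riemann--Hurwitz on the geometric cover for the other.

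For the "if" direction, I would set $f=(X-a)^\ell+b$ with $\ell\mid q-1$, so that $\F_q$ contains a primitive $\ell$th root of unity $\zeta$. The roots of $f(X)-t$ in some algebraic closure of $\F_q(t)$ are $a+\zeta^i(t-b)^{1/\ell}$ for $i=0,\dots,\ell-1$, so the splitting field over $\F_q(t)$ is just $\F_q(t,(t-b)^{1/\ell})$, a cyclic Kummer extension of degree $\ell$. Its constant field is $\F_q$, so $A(f)=G(f)\cong C_\ell$.

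For the "only if" direction, the first step is to observe that if $A(f)\cong C_\ell$ is transitive of order exactly $\ell$, then adjoining a single root $x$ of $f(X)-t$ already gives the splitting field $F=\F_q(x)$; in particular $F/\F_q(t)$ is Galois and $F$ has genus zero. I would then base-change to $\overline{\F}_q$ and apply Riemann--Hurwitz to the degree $\ell$ cyclic cover $\overline{\F}_q(x)/\overline{\F}_q(t)$, which is tamely ramified because $\ell\nmid q$. Since $\ell$ is prime, every ramification index is either $1$ or $\ell$, so the formula
$$-2=-2\ell+\sum_{P}(e_P-1)$$
forces exactly two totally ramified places of $\overline{\F}_q(t)$. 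One of them is the place at infinity (because $\deg f=\ell$), so there is a unique finite ramified place $t=t_0$. Total ramification there means that $f(X)-t_0$ has a single root of multiplicity $\ell$ in $\overline{\F}_q$, i.e.\ $f(X)-t_0=c(X-a)^\ell$ for some $a\in\overline{\F}_q$ and $c\in\overline{\F}_q^*$.

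Assuming $f$ is taken monic (this only rescales $t$ and does not change the monodromy), the leading coefficient forces $c=1$, the coefficient of $X^{\ell-1}$ together with $\ell\neq0$ in $\F_q$ forces $a\in\F_q$, and then $t_0=f(a)\in\F_q$. This gives the required shape $f=(X-a)^\ell+b$. Finally, to deduce $\ell\mid q-1$, I would use that $A(f)=G(f)$ is equivalent to the splitting field having $\F_q$ as its exact constant field. But the splitting field contains all ratios of pairs of roots of $f-t$, and from the explicit form these ratios are precisely the $\ell$th roots of unity; hence $\mu_\ell\subseteq\F_q$, i.e.\ $\ell\mid q-1$.

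The main obstacle is not any single computation but rather being careful with the monic normalization and with the descent from $\overline{\F}_q$ to $\F_q$: concluding $a,t_0\in\F_q$ from the geometric factorization uses crucially that $\ell$ is invertible in $\F_q$, and concluding $\ell\mid q-1$ uses crucially the hypothesis $A(f)=G(f)$ rather than just information about $G(f)$. Everything else is a direct application of Riemann--Hurwitz and Kummer theory.
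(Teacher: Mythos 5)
Your proof is correct and follows essentially the same route as the paper: tame ramification plus Riemann--Hurwitz forces a finite totally ramified place, coefficient comparison (using $\ell \in \F_q^*$) descends $a$ and $b$ to $\F_q$, and the hypothesis $A(f)=G(f)$ forces the constant field to be $\F_q$ while the splitting field visibly contains a primitive $\ell$th root of unity, giving $\ell\mid q-1$. One small imprecision: the quantities in the splitting field that are $\ell$th roots of unity are ratios of the translates $x_i-a$, not ratios of the roots $x_i$ themselves (unless $a=0$), though the intended conclusion $\mu_\ell\subseteq F$ is of course correct.
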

\begin{proof}
Sufficiency is obvious.

Conversely, suppose that $G(f)=A(f)\cong C_\ell$. Let $x$ be a root of $f(X)-t$ and $F\coloneqq \F_q(x)$. Then the ramification in $F$ is always tame, and hence the Riemann-Hurwitz formula implies that there must be a finite place of $\overline{\F}_q(t)$ that ramifies in $F$. Let this place correspond to $b\in\overline{\F}_q$. Then $f(X)-b$ must factor as $(X-a)^\ell$ for some $a\in \overline{\F}_q$. Comparing the coefficients of the linear terms, it follows immediately that $a\in \F_q$, and hence $b\in \F_q$. But then $f=(X-a)^\ell+b$, and in order to have $G(f)=A(f)$ the field of constants $F\cap\overline{\F}_q$ must be $\F_q$, implying immediately that $\ell\mid q-1$ because certainly $F$ contains a primitive $\ell$-th root of unity.
\end{proof}

\begin{theorem}

The following holds:
\begin{enumerate}
\item $G_2(q)=2$ for every $q$.
\item $G_3(q)=\begin{cases}3 & \mbox{if }3\mid q(q-1)\\
			6 & \mbox{otherwise} \end{cases}$.
\end{enumerate}

\end{theorem}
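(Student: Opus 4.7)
The plan is to combine the trivial lower bound $G_n(q) \ge n$ (since $G(f)$ is a nontrivial transitive subgroup of $S_n$, as $f-t$ is geometrically irreducible) with explicit polynomials realising the claimed values, produced via Lemmas~\ref{additive_case} and~\ref{cyclic_case} or a short ramification computation. For part (1) the bound gives $|G(f)| \ge 2$; to match it I would take $f = X^2$ in odd characteristic, giving a Kummer extension with $A(f) = G(f) = C_2$, and $f = X^2 + X$ in characteristic $2$, whose roots over $\F_q(t)$ are $x$ and $x+1$, so the splitting field is $\F_q(x)$, again of degree $2$ with $A(f) = G(f) = C_2$.

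For part (2) I argue by cases on $q \bmod 3$, using that the only transitive subgroups of $S_3$ are $C_3$ and $S_3$. When $3 \mid q$ (so $p = 3$), I take $f = X^3 - X$: because $(x+c)^3 - (x+c) = x^3 - x$ for every $c \in \F_3$, the three roots $x, x+1, x+2$ of $f-t$ all lie in $\F_q(x)$, yielding $A(f) = G(f) = C_3$ and $G_3(q) = 3$. When $3 \mid q - 1$, Lemma~\ref{cyclic_case} applied to $f = X^3$ directly gives $A(f) = G(f) = C_3$ and again $G_3(q) = 3$. In the remaining case $3 \mid q + 1$ (so $3 \nmid q$), the equality $|A(f)| = |G(f)| = 3$ would force $A(f) = G(f) = C_3$, and then Lemma~\ref{cyclic_case} with $\ell = 3$ would force $3 \mid q - 1$, contradicting $3 \mid q + 1$; hence $G_3(q) \ge 6$. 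To realise equality I would take $f = X^3 + X$: the place at infinity is totally (and tamely, since $3 \nmid q$) ramified of index $3$ in $\F_q(x)/\F_q(t)$, contributing a $3$-cycle to the inertia inside $G(f)$, while a finite critical place $t_0 = f(\alpha)$ gives a factorisation of $f - t_0$ with exactly one double factor, and the corresponding inertia, of order $2$ and with orbit type $(1,2)$ on the roots, is generated by the transposition swapping the coalescing roots. These together generate $S_3 \le G(f) \le S_3$, forcing $G(f) = S_3 = A(f)$ and $|A(f)| = |G(f)| = 6$.

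The step I expect to be the most delicate is the ramification argument at the finite critical place in characteristic $2$. There $f' = X^2 + 1 = (X+1)^2$ has a single double root, the unique finite branch value is $t_0 = 0$, and the factorisation $f = X(X+1)^2$ produces wild ramification ($e = p = 2$). The classical discriminant criterion separating $A_3$ from $S_3$, available via $-4a^3 - 27(b-t)^2$ in odd characteristic, is unavailable here, so the transposition generator of $G(f)$ must be extracted directly from the $2$-group inertia and its action on the three roots; this is still forced by the orbit-size considerations above, since the only $2$-subgroups of $S_3$ are trivial or cyclic of order $2$.
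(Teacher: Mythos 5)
Your proof is correct and follows essentially the same strategy as the paper: the lower bound $G_n(q)\ge n$ comes from transitivity of $G(f)$, the values $G_2(q)=2$ and $G_3(q)=3$ (when $3\mid q(q-1)$) are realized by explicit polynomials, and the case $3\nmid q(q-1)$ is ruled out for $C_3$ via Lemma~\ref{cyclic_case}, leaving only $S_3$. The one place where you go further than the paper is the upper bound $G_3(q)\le 6$: the paper invokes its earlier general remark that $A(f)=G(f)=S_n$ is always realizable and leaves it at that, whereas you actually exhibit $f=X^3+X$ and extract a $3$-cycle (tame total ramification at infinity) and a transposition (orbit type $(1,2)$ at the unique finite branch place) generating $S_3$. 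Your care about the wild case in characteristic~$2$ is also well-placed and correct: the inertia subgroup need not be cyclic a priori, but the orbit sizes $1$ and $2$ dictated by the factorisation $X(X+1)^2$ pin it down to order~$2$, hence a transposition, so the discriminant criterion (which would fail in characteristic~$2$) is not needed. Two cosmetic remarks: for $n=2$ the transitivity of $G(f)\le S_2$ already forces $G(f)=A(f)=C_2$ for \emph{every} separable quadratic, so the Kummer/Artin--Schreier computations are not strictly necessary; and for $3\mid q$ the paper cites Lemma~\ref{additive_case} (which is literally about $X^q-X$) somewhat loosely, while your direct argument with $X^3-X$ and the shift by $\F_3$ is the cleaner phrasing of the same idea.
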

\begin{proof}
When $n=2$ and $q$ is odd, every quadratic $f\in \F_q[X]$ has $G(f)=A(f)\cong C_2$. When $q$ is even, by Lemma \ref{additive_case} if $f=X^2+X\in \F_q(X)$ we have $G(f)=A(f)=2$.

When $n=3$ and $3\mid q$, by Lemma \ref{additive_case} for $f=X^3-X$ we have $G(f)=A(f)\cong C_3$. When $3\mid q-1$ for $f=X^3$ we have $G(f)=A(f)\cong C_3$. When $3\nmid q(q-1)$ by Lemma \ref{cyclic_case} we cannot have $G(f)=A(f)=C_3$. The only other transitive group inside $S_3$ is $S_3$ itself, and hence $G_3(q)=6$.
\end{proof}

\begin{theorem}
$$G_4(q)=\begin{cases} 24 & \mbox{if }q=2\\
4 & \mbox{if $4\mid q-1$ or $q=2^m$ for some $m>1$}\\
				8 & \mbox{otherwise}\\\end{cases}.$$
\end{theorem}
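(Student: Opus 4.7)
The plan is to prove the three cases of the theorem by separate arguments: an explicit construction realising the claimed value of $G_4(q)$ for the upper bound, and a case analysis of the transitive subgroups of $S_4$ ruling out smaller groups for the lower bound. The central observation is the following: if $f\in\F_q[X]$ of degree $4$ satisfies $G(f)=A(f)$ with $|G(f)|=4$, then the splitting field of $f(X)-t$ coincides with $\F_q(x)$ itself (which is rational), and the unique place above $\infty_t$ has ramification index $4=|G(f)|$. Hence the inertia at that place is all of $G(f)$, which therefore fixes it; identifying $\aut(\F_q(x)/\F_q)\cong\mathrm{PGL}_2(\F_q)$, this embeds $G(f)$ into the point-stabiliser $\AGL_1(\F_q)=\F_q\rtimes\F_q^*$ as an order-$4$ subgroup acting regularly on a $4$-element set.

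For the upper bound I would proceed case by case. When $4\mid q-1$, the polynomial $f(X)=X^4$ works: the roots of $X^4-t$ differ by a fourth root of unity, which lies in $\F_q$, giving $A(f)=G(f)\cong C_4$. When $q=2^m$ with $m\geq 2$, I would pick any two-dimensional $\F_2$-subspace $V\subseteq\F_q$ and set $f(X)=\prod_{v\in V}(X-v)$; a direct computation shows that $f$ is an additive separable polynomial, its four roots $\{x+v\mid v\in V\}$ lie in $\F_q(x)$, and hence $A(f)=G(f)\cong V\cong V_4$. In the ``otherwise'' case (odd $q$ with $4\nmid q-1$), a polynomial of the form $X^4+bX^2+c$ with generic $b,c\in\F_q$ has both arithmetic and geometric monodromy equal to $D_4$ by the classical biquadratic analysis. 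Finally for $q=2$, I would take $f=X^4+X^3$: the factorisation $X^3(X+1)$ produces a $(3,1)$ ramification pattern at $t=0$ and hence a $3$-cycle in $G(f)$; the irreducibility of $X^4+X^3+1$ over $\F_2$ shows that the Frobenius at $t=1$ is a $4$-cycle in $A(f)$; and since a $3$-cycle together with a $4$-cycle generate $S_4$, this forces $A(f)=G(f)=S_4$.

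For the lower bound, the affine embedding above immediately forces a quick check inside $\AGL_1(\F_q)$: $C_4$ embeds there iff $\F_q^*$ contains an element of order $4$, i.e.\ $4\mid q-1$, and a transitive $V_4$ embeds iff $\text{char}\,\F_q=2$ and $\dim_{\F_2}\F_q\geq 2$, i.e.\ $q=2^m$ with $m\geq 2$; this rules out $G_4(q)=4$ in all the remaining cases. The hard part will be $q=2$, where one must further exclude $|G(f)|\in\{8,12\}$ corresponding to $D_4$ and $A_4$. My plan is to enumerate the six separable monic quartics in $\F_2[X]$ modulo an additive constant: the two additive polynomials $X^4+X$ and $X^4+X^2+X$ have $A(f)\neq G(f)$ by direct computation, their splitting fields acquiring a constant extension to $\F_4$ and $\F_8$ respectively; and for each of the four non-additive polynomials, a ramification analysis at a finite branch point exhibits either a $3$-cycle (from a $(3,1)$ pattern) or a transposition (from a $(2,1,1)$ pattern) in $G(f)$, while the Frobenius on an $\F_2$-rational unramified fiber provides an element of the other type in $A(f)$, so $A(f)=S_4$; since $G(f)\triangleleft A(f)$ and the only transitive normal subgroups of $S_4$ are $V_4,A_4,S_4$, the presence of a transposition in $G(f)$ then forces $G(f)=S_4$, yielding $G_4(2)=24$.
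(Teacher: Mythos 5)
Your proof is correct and reaches the same conclusion, but the heart of your lower-bound argument is genuinely different from the paper's. The paper handles odd $q$ with $4\nmid q-1$ by two separate mechanisms: it rules out $A(f)\leq A_4$ (hence $V_4$ and $A_4$) by observing that the discriminant of $f-t$ is a degree-$3$ polynomial in $t$ and so cannot be a square in $\F_q(t)$, and then rules out $C_4$ by a ramification analysis at a finite branch point (totally ramified gives $(X-a)^4$, forcing $4\mid q-1$; $(2,2)$ gives a decomposition $f=g\circ h$ and a criterion from the quoted reference). Your observation that $|G(f)|=|A(f)|=4$ forces the splitting field to be the rational field $\F_q(x)$ with $A(f)$ inside the stabiliser of the totally ramified place at infinity, hence inside $\AGL_1(\F_q)$ of order $q(q-1)$, collapses both of these into a single Lagrange-type argument: when $q$ is odd and $4\nmid q-1$, $4$ does not divide $q(q-1)$, so no order-$4$ subgroup can exist, and $C_4$ versus $V_4$ need not be distinguished. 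This is cleaner and also disposes of $q=2$ instantly ($|\AGL_1(\F_2)|=2$). It is worth noting that your $\AGL_1$ argument only excludes order $4$; for $q=2$ you still need the enumeration to exclude $D_4$ and $A_4$, and here you are more explicit than the paper (which just says ``a quick search shows''). Two small things to tighten in that enumeration: for the two non-additive quartics with a $(2,1,1)$ ramification pattern the ramification index $2$ is \emph{wild} in characteristic $2$, so the inertia group is not automatically generated by a ``transposition'' in the usual tame sense; one should argue instead that an inertia group with orbits of sizes $2,1,1$ embeds into $S_2$ and hence has order exactly $2$, acting as a transposition. And for the two quartics with a $(3,1)$ pattern the ramification argument only places a $3$-cycle in $G(f)$, hence only gives $G(f)\in\{A_4,S_4\}$, not $G(f)=S_4$; this is harmless for the theorem (you only need one quartic with $G(f)=A(f)=S_4$, which the $(2,1,1)$ ones supply, plus $|A(f)|=24$ for all four non-additive ones), but the final sentence of your sketch conflates the two and should be phrased more carefully. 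Your upper-bound constructions ($X^4$, the $\F_2$-subspace product, a generic $X^4+bX^2+c$, and an explicit quartic over $\F_2$) are essentially the same as the paper's.
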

\begin{proof}
Recall that the transitive groups of degree $4$ are: $C_4,C_2\times C_2,D_4,A_4$ and $S_4$. Here the non-trivial elements of $C_2\times C_2$ are products of two disjoint transpositions, and therefore this copy of $C_2\times C_2$ is contained in $A_4$.

If $q$ is even and greater than $2$, then there exist distinct elements $\alpha_1,\alpha_2,\alpha_3\in \F_q^*$ such that $\alpha_1+\alpha_2+\alpha_3=0$. Now let $f=X(X+\alpha_1)(X+\alpha_2)(X+\alpha_3)$; this is of the form $X^4+aX^2+bX$ for some $a,b$ with $b\neq 0$. Therefore if $x$ is a root of $f(X)-t$, then all other roots are of the form $x+\alpha_i$ for some $i\in\{1,2,3\}$. It follows that $\F_q(x)/\F_q(t)$ is a Galois extension of degree $4$, and therefore $G(f)=A(f)$ and $G_2(q)=4$. In fact it is easy to see that these monodromy groups are isomorphic to $C_2\times C_2$: if $\sigma$ is any element of $G(f)$ and $r$ is any root of $f-t$, then $\sigma(r)=r+\alpha$ for some $\alpha\in \F_q$, and hence $\sigma^2(r)=r$, showing that non-trivial elements have order $2$.

If $q=2$, a quick search shows that the only two separable polynomials $f$ with $A(f)\neq S_4$ are $X^4+X$ and $X^4+X^2+X$. However, the first one has $A(f)\cong D_8$ and $G(f)\cong C_2\times C_2$, while the second one has $A(f)\cong A_4$ and $G(f)\cong C_2\times C_2$. Hence $G_4(2)=24$.

If $4\mid q-1$, then for $f=X^4$ we have $G(f)=A(f)\cong C_4$, and therefore $G_4(q)=4$.

Finally, suppose $q$ is odd and $4\nmid q-1$. If $A(f)\leq A_4$, then the discriminant of $f-t$ is a square in $\F_q(t)$. However, such discriminant is always a polynomial of degree $3$ in $t$, so $A(f)\leq A_4$ can never happen (and therefore, in particular, we cannot have $A(f)\cong C_2\times C_2$). If it was $A(f)=G(f)\cong C_4$, then since the ramification in the splitting field $F$ of $f-t$ is tame, by Riemann-Hurwitz there is a finite place of $\overline{\F}_q(t)$ that ramifies in $F$. Let $R$ be a place of $F$ lying over it. Then the decomposition group $D(R|P)$ is either $C_4$ or $C_2$. In the former case, for some $t_0\in \overline{\F}_q$ the polynomial $f-t_0$ factors as $(X-a)^4$, and this leads to a contradiction as in the proof of Lemma \ref{cyclic_case}. In the latter case, up to translations we have $f-t_0=X^2(X-a)^2$ for some $a,t_0\in \overline{\F}_q$ with $a\neq 0$, because the element of order $2$ in $C_4$ is a product of two disjoint transpositions. This implies that $f$ equals the composition $g\circ h$, where $g=X^2$ and $h=X(X-a)$, and it is a well-known fact (see for example \cite{ferra}) that for the Galois group of $g\circ h-t$ to be smaller than $D_4$ one would need $t$ and $a^2/4-t$ to be linearly dependent in the $\F_2$-vector space $\F_q(t)^*/{\F_q(t)^*}^2$, which clearly does not hold since $a\neq 0$. Hence $G_4(q)\geq 8$. On the other hand, again for the same well-known reasons one has that if $f=X^4+bX^2$ for some $b\neq 0$ then $G(f)=A(f)=D_4$. Hence $G_4(q)=8$.
\end{proof}

\section{Degree $5$: $\AGL_1(\mathbb{F}_5)$ never occurs}
In this section we will prove that if $5\nmid q$ then there exists no degree $5$ polynomial $f$ with $G(f)\cong \AGL_1(\F_5)$. From now on, we let $M$ be the splitting field of $f-t$ over $\F_q(t)$. If $P$ is a place of $\overline{\F}_q(t)$ and $R$ is a place of $M$ lying above it, we denote by $D(R|P)$ the corresponding decomposition group. For every $t_0\in \overline{\F}_q$, we denote by $P_{t_0}$ the corresponding place of $\overline{\F}_q(t)$.

We start with a preliminary lemma.
\begin{lemma}\label{factorization_lemma}
Let $q$ be a prime power with $5\nmid q$. Let $f\in \F_q[X]$ be a degree 5 polynomial and assume that $G(f)\cong \AGL_1(\F_5)$. Then there are $a,b,t_0\in \overline{\F}_q$ with $a\neq b$ such that $f-t_0=(X-a)^4(X-b)$.
\end{lemma}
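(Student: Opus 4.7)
The plan is to reformulate the desired factorization in terms of the inertia group at a finite ramified place of the Galois closure $M/\overline{\F}_q(t)$, and to exploit the normal series $1\triangleleft C_5\triangleleft \AGL_1(\F_5)$. The key observation is that a Sylow-$2$ subgroup $C_4$ of $\AGL_1(\F_5)$ (i.e., the stabilizer of a point) acts on the five roots of $f(X)-t$ with orbit pattern $(1,4)$, which under the dictionary between orbits of decomposition groups and factorizations corresponds to a factorization $(X-a)^4(X-b)$ over $\overline{\F}_q$ with $a\neq b$. Hence it is enough to produce a finite place $P_{t_0}$ whose inertia group in $M$ is a conjugate of $C_4$.

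First I would pin down the inertia $I_\infty$ at the infinite place. Since $\deg f=5$, the place $P_\infty$ is totally ramified of index $5$ in $\overline{\F}_q(x)$, so $I_\infty$ is transitive on the five roots; as $5\nmid q$ the ramification is tame and $I_\infty$ is cyclic, and the only cyclic transitive subgroup of $\AGL_1(\F_5)$ is the normal $C_5$. Using the standard fact that the inertia subgroups at all ramified places of a Galois cover of $\mathbb{P}^1_{\overline{\F}_q}$ generate the Galois group (since $\mathbb{P}^1_{\overline{\F}_q}$ has no non-trivial connected \'etale cover), I would project modulo $C_5$. The image of $I_\infty$ in $\AGL_1(\F_5)/C_5\cong C_4$ is trivial, so the finite inertia groups must generate $C_4$. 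Because $C_4$ is cyclic, this forces some finite inertia $I$ to surject onto $C_4$, and lifting a generator of $C_4$ produces an element of $I$ of order exactly $4$ in $\AGL_1(\F_5)$ (the possible element orders being $1,2,4,5$).

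The main obstacle — and the step that really uses $5\nmid q$ — is ruling out $I=\AGL_1(\F_5)$. The subgroup lattice of $\AGL_1(\F_5)$ contains only orders $1,2,4,5,10,20$, so the subgroups containing an element of order $4$ are precisely the five conjugate Sylow-$2$ subgroups $C_4$ and the whole group. To exclude $I=\AGL_1(\F_5)$ I would invoke the structure of inertia in local extensions with algebraically closed residue field: the wild inertia $P$ (the unique Sylow-$p$ of $I$, where $p=\operatorname{char}\F_q$) is normal in $I$, with cyclic quotient $I/P$ of order coprime to $p$. With $p\neq 5$ by hypothesis, if $p\neq 2$ then $P=1$ and $I/P=\AGL_1(\F_5)$ would have to be cyclic, which it is not; if $p=2$ then $P$ would be a Sylow-$2$ subgroup of $\AGL_1(\F_5)$, i.e., some $C_4$, but $C_4$ is not normal in $\AGL_1(\F_5)$ (there are five conjugate Sylow-$2$ subgroups).

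Therefore $I\cong C_4$, and its orbits on the five roots of $f(X)-t$ consist of a fixed point and a $4$-cycle. Translating this orbit decomposition back into the factorization of $f(X)-t_0$ over $\overline{\F}_q$, using that distinct orbits correspond to distinct roots with multiplicities equal to the orbit sizes, yields $f-t_0=(X-a)^4(X-b)$ with $a\neq b$, as required.
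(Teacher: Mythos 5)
Your proof is correct, and it takes a genuinely different, more uniform route than the paper's: all three critical steps are settled with one tool, the structure of inertia over an algebraically closed residue field (normal wild $p$-part, cyclic tame quotient). For the place at infinity, the paper computes $e(Q_\infty|P_\infty)=5$ by valuations, shows $M=\overline{\F}_q(x,x')$ via $2$-transitivity, and then applies Abhyankar's Lemma to conclude $D(R_\infty|P_\infty)\cong C_5$; you extract the same conclusion from transitivity together with the inertia structure theorem applied to the three transitive subgroups $C_5$, $D_5$, $\AGL_1(\F_5)$. One small gloss: your phrase ``as $5\nmid q$ the ramification is tame'' as written controls only $\overline{\F}_q(x)/\overline{\F}_q(t)$, not $M/\overline{\F}_q(t)$; you should close that with the same normal-$p$-Sylow observation you invoke later (for $p\in\{2,3\}$ neither $D_5$ nor $\AGL_1(\F_5)$ has a normal $p$-subgroup with cyclic quotient). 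To locate a finite place whose inertia contains a $4$-cycle, the paper argues by parity: inertia elements of even order cannot be transpositions, so if none were $4$-cycles they would all be double transpositions, forcing $G(f)\leq A_5$; your projection onto $C_4\cong\AGL_1(\F_5)/C_5$, where the infinite inertia dies and the finite inertia images must generate the cyclic group $C_4$ so that one of them already surjects, is cleaner and avoids the parity case analysis. Finally, to rule out $I\cong\AGL_1(\F_5)$, the paper passes through the factorization $f-t_0=(X-a_0)^5$, rationality of $a_0$, and a constant-field argument, whereas your non-cyclic-tame-quotient ($p\neq 2$) / non-normal-Sylow ($p=2$) argument is shorter and purely group-theoretic. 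Both proofs close identically, reading the $(1,4)$-orbit pattern of $C_4$ as the factorization $(X-a)^4(X-b)$ with $a\neq b$.
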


\begin{proof}
We start by showing that $M = \overline{\mathbb{F}}_q(x,x') = \overline{\mathbb{F}}_q(x) \overline{\mathbb{F}}_q(x')$ for any two roots $x \neq x'$ of $f(X) - t$ in $M$. Observe that $\overline{\mathbb{F}}_q(t)(x) = \overline{\mathbb{F}}_q(x)$ since $t = f(x) \in \overline{\mathbb{F}}_q(x)$ (and similarly for $x'$) and write $F = \overline{\mathbb{F}}_q(x)$ and $F' = \overline{\mathbb{F}}_q(x')$. Clearly we have $[F : \overline{\mathbb{F}}_q(t)] = [F' : \overline{\mathbb{F}}_q(t)] = \deg f = 5$. Because $G(f) \cong \AGL_1(\mathbb{F}_5)$ is $2$-transitive, the stabilizer $G_x \subseteq G$ of $x$ acts transitively on the four other roots of $f - t$. In particular, since $G_x$ is the Galois group of $(f(X) - t)/(X-x)$ over $\overline{\mathbb{F}}_q(t)$, and since the orbit of $x'$ under the action of $G_x$ is a set of size $4$, we have $[FF' : F] = [G : G_x] = 4$. By definition $M \supseteq FF'$, so since $|G| = 20$ we have $M = FF' = \overline{\mathbb{F}}_q(x,x')$.

Now, let $P_{\infty}$ be the place at infinity of $\overline{\mathbb{F}}_q(t)$ and let $R_{\infty}$ be a place of $M$ lying over $P_{\infty}$. Let $Q_{\infty} = R_{\infty} \cap F$ and consider the ramification index $e(Q_{\infty}|P_{\infty})$ of $Q_{\infty}$ over $P_{\infty}$. Recalling that $f(x) = t$ in $F$, we have
$$
\upsilon_{Q_{\infty}}(f(x)) = \upsilon_{Q_{\infty}}(t)  = \upsilon_{P_{\infty}}(t) \cdot e(Q_{\infty}|P_{\infty})  = - e(Q_{\infty}|P_{\infty}),
$$
since $P_{\infty}$ is a pole of order $1$ of $t$. On the other hand, we also have $\upsilon_{Q_{\infty}}(f(x)) = \deg f \cdot \upsilon_{Q_{\infty}} (x) = -5$ by the strict triangle inequality. This yields $e(Q_{\infty}|P_{\infty}) = 5$, and an identical argument applied to $Q'_{\infty} = R_{\infty} \cap F'$ yields $e(Q'_{\infty}|P_{\infty}) = 5$. Since we have seen that $M$ is the compositum of the fields $F$ and $F'$ (both of which are tame extensions of $\overline{\mathbb{F}}_q(t)$ as we are working in characteristic $\neq 5$), it now follows from Abhyankar's Lemma \cite[Theorem 3.9.1]{stich} that $e(R_{\infty}|P_{\infty}) = \textup{lcm}\{e(Q_{\infty}|P_{\infty}), e(Q'_{\infty}|P_{\infty})\} = 5$. Thus the decomposition group $D(R_{\infty}|P_{\infty})$ is a group of order $5$, and hence it is isomorphic to $C_5$.

Next, we claim that there must be some $t_0 \in \overline{\mathbb{F}}_q$ such that for any place $R$ of $M$ lying over $P_{t_0}$, the decomposition group $D(R|P_{t_0})$ is isomorphic to $C_4$. To prove it, start by noticing that there must be some $t_0 \in \overline{\mathbb{F}}_q$ such that the decomposition group of any place of $M$ lying above it contains a cycle of order $4$. In fact, consider the subset of $G(f)$ of elements of even order that belong to some decomposition group: this contains no transpositions because the transitive copy of $\AGL_1(\F_5)$ inside $S_5$ contains no transpositions, and on the other hand if all such elements had order $2$ then they would all be products of two transpositions. However the decomposition groups generate $G(f)$,\footnote{This is because if $G$ is the subgroup of $G(f)$ generated by all the decomposition subgroups, then $M^G$ is an unramified extension of $\overline{\F}_q(t)$, and there are no non-trivial such extensions.} and in this latter case it would follow that $G(f)\leq A_5$, which is false once again. So let $t_0 \in \overline{\mathbb{F}}_q$ be such that for some place $R$ of $M$ lying above $P_{t_0}$, the decomposition group contains a cycle of order $4$. If we had $C_4\subset D(R|P_{t_0})\neq C_4$, it would follow $D(R|P_{t_0})\cong G(f)$ by the maximality of $C_4$ in $G(f)$. But then $R|P_{t_0}$ would be totally ramified, and hence $f(X) - t_0 = (X-a_0)^5$ for some $a_0 \in \overline{\mathbb{F}}_q$. Since the field of constants of $M/\overline{\mathbb{F}}_q(t)$ is trivial, this factorization implies $5 \mid q-1$. But then $G \cong C_5$, an immediate contradiction. Thus $D(R|P_{t_0}) \cong C_4$.

To conclude the proof, notice that specializing at the place $P_{t_0}$ and applying the Dedekind-Kummer Theorem \cite[Theorem 3.3.7]{stich} allows us to write $f(X) - t_0 = (X-a)^4(X-b)$ for $a,b \in \overline{\mathbb{F}}_q$ with $a \not = b$.
\end{proof}
We are now ready to prove that $\AGL_1(\F_5)$ cannot occur as a geometric monodromy group. The proof will require separate arguments for even and odd characteristics.
\begin{theorem}\label{AGL}
Let $q$ be a prime power with $5\nmid q$ and $f\in \F_q[X]$ a polynomial of degree 5. Then $G(f)\not\cong\AGL_1(\F_5)$.
\end{theorem}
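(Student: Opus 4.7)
The plan is to apply Lemma \ref{factorization_lemma} to reduce to the factorization $f - t_0 = (X-a)^4(X-b)$ for some $a \neq b$ in $\overline{\F}_q$ and some $t_0 \in \overline{\F}_q$, and then obtain a contradiction in odd and even characteristic by analyzing the ramification of $M/\overline{\F}_q(t)$ at carefully chosen places. The unifying thread is that the group-theoretic rigidity of $\AGL_1(\F_5)$---it contains no transpositions, is sharply $2$-transitive, and has $C_4$ realized only as point stabilizers---is incompatible with the ramification that such an $f$ would be forced to carry.

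In odd characteristic I would compute $f'(X) = (X-a)^3(5X - 4b - a)$, which exhibits a second critical point $c = (4b+a)/5 \neq a$. The corresponding critical value $t_1 = f(c)$ differs from $t_0$ by a nonzero constant multiple of $(b-a)^5$, and $f(X) - t_1 = (X-c)^2 h(X)$ where $h \in \overline{\F}_q[X]$ is a cubic with three distinct roots: the only possible multiple root of $f - t_1$ is $c$, since $a$ is not a root of $f - t_1$. By the standard tame dictionary between inertia orbits and factorization multiplicities, a generator of the inertia at a place of $M$ above $P_{t_1}$ acts on the roots of $f - t$ with cycle type $(2,1,1,1)$, i.e.\ as a single transposition. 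Since the involutions of $\AGL_1(\F_5)$ are all conjugates of $x \mapsto -x$ and have cycle type $(2,2,1)$, no transposition is available, yielding the contradiction.

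In characteristic $2$ this second-critical-value trick fails because $f'(X) = (X+a)^4$, so $a$ is the only critical point and $P_{t_0},P_\infty$ are the only ramified places in $F = \overline{\F}_q(x)$. The plan is instead a bookkeeping argument in the tower $\overline{\F}_q(t) \subset F \subset M$. Riemann--Hurwitz for $F/\overline{\F}_q(t)$ (with $g_F=0$, one place over $P_\infty$ of type $(5)$, one over $P_{t_0}$ of type $(4,1)$) pins down the different exponent $d(Q|P_{t_0}) = 4$ at the wildly ramified place $Q$ above $P_{t_0}$. In $M$, the inertia $I$ at $R\mid P_{t_0}$ has orbit structure $(4,1)$ on the five roots, which forces $I$ to be a point stabilizer $C_4$ (the only $\AGL_1(\F_5)$-subgroups realizing this orbit shape). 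Choosing $x$ in the size-$4$ orbit of $I$ and setting $H = G_x$, sharp $2$-transitivity gives $H \cap I = \{e\}$, so $M/F$ is unramified above $Q$. The tower formula for the different then yields
\[
d(R|P_{t_0}) = e(R|Q)\,d(Q|P_{t_0}) + d(R|Q) = 1\cdot 4 + 0 = 4,
\]
and summing $5\cdot 4 = 20$ from the places above $P_{t_0}$ and $4\cdot 4 = 16$ from the tame places above $P_\infty$ into Riemann--Hurwitz for $M/\overline{\F}_q(t)$ gives $2g_M - 2 = -40 + 36 = -4$, hence $g_M = -1$, which is absurd. The main obstacle is precisely this even-characteristic step: wild ramification has to be controlled with care, and it is only through the sharp $2$-transitivity of $\AGL_1(\F_5)$ that one can force $M/F$ unramified above $Q$ and thereby trap the different exponent at a value too small for $g_M$ to be non-negative.
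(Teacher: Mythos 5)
Your odd-characteristic argument is identical to the paper's: you locate the second critical value $t_1 = f((4b+a)/5)$, show the fiber there has exactly one double root, and use the fact that $\AGL_1(\F_5)$ contains no transpositions. The even-characteristic argument, however, is genuinely different from the paper's, and it is correct. The paper works in the completion $\hat F_{Q_0} = \overline{\F}_q((x))$, shows that any other root $z$ of $f-t$ can be written as a Laurent series $z = x + ux^i$ with $i\geq 2$ (using $c^4=1\Rightarrow c=1$ in characteristic $2$), and derives a contradiction by computing the valuation of $f(z)-t$. You instead run a ramification bookkeeping: Riemann--Hurwitz for $F/\overline{\F}_q(t)$ pins the different exponent $d(Q_0|P_{t_0})$ at $4$; sharp $2$-transitivity of $\AGL_1(\F_5)$ (via $G_x\cap G_y=\{e\}$ for $x\neq y$) makes $M/F$ unramified above $Q_0$; transitivity of the different then forces $d(R|P_{t_0})=4$ for every place $R$ of $M$ above $P_{t_0}$ (by Galois conjugacy), and plugging the resulting $\deg\mathfrak d_{M/\overline{\F}_q(t)}=16+20=36$ into Riemann--Hurwitz for $M/\overline{\F}_q(t)$ yields $g_M=-1$, which is absurd. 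This is a clean and conceptually transparent alternative. One observation worth recording: the contradiction actually surfaces even earlier, since $d(R|P_{t_0})=4$ is already incompatible with Hilbert's different formula for the Galois extension $M/\overline{\F}_q(t)$ at $R$---the inertia $G_0\cong C_4$ is entirely wild in characteristic $2$, so $G_1=G_0$ and $d(R|P_{t_0})=\sum_{i\geq 0}(|G_i|-1)\geq 3+3=6>4$. Either route gives the contradiction; the negative-genus formulation is a perfectly valid way to exhibit it, and arguably easier to state since it avoids invoking the higher ramification filtration explicitly. What your approach buys is a uniform ``global'' argument free of explicit power-series manipulation; what the paper's approach buys is self-containment (no appeal to Hilbert's formula or to the subtleties of wild different exponents beyond Hensel lifting).
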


\begin{proof}
Assume by contradiction that $G(f)\cong\AGL_1(\F_5)$. By Lemma \ref{factorization_lemma}, there are $t_0,a,b\in \overline{\F}_q$ such that $f-t_0=(X-a)^4(X-b)$. Since $\gal(f(X)-t \mid \overline{\F}_q(t))\cong\gal(f(X-c)-(t-d) \mid \overline{\F}_q(t-d))$ for every $c,d\in\overline{\F}_q$, we can assume without loss of generality that $t_0=a=0$ and $b\neq 0$.

\vspace{3mm}

\textbf{Odd characteristic.} Computing $f'(X) = X^3(5X-4b)$ shows that $f'(4b/5) = 0$, so for $t_1 = f(4b/5) \in \overline{\mathbb{F}}_q$, we see that $f(X) - t_1$ is divisible by $(X-4b/5)^2$. Furthermore, since $X = 4b/5$ is not a root of $f''(X) = 4X^2(5X-3b)$, it follows that $X=4b/5$ is precisely a double root of $f(X) - t_1$. Notice that $f(4b/5)\neq 0 = f(4b/5) - t_1$ and hence $t_1\neq 0$, so $X=0$ is not a root of $f(X) - t_1$. This implies that the only repeated root of $f(X) - t_1$ is $X = 4b/5$ since the only roots of $f'(X)$ are $X = 0$ and $X=4b/5$. In other words, we can write $f(X) - t_1 = (X - 4b/5)^2 (X - x_1) (X - x_2) (X - x_3)$ for pairwise distinct elements $4b/5, x_1, x_2, x_3 \in \overline{\mathbb{F}}_q$. Finally, let $R$ be any place of $M$ lying over $P_{t_1}$. Then the previous factorization shows that there is a transposition in $D(R|P_{t_1})\leq G(f)$. But $G(f)$ contains no transpositions, and we have a contradiction.

\vspace{3mm}

\textbf{Even characteristic.} From now on, we let $x$ be a root of $f(X)-t$ in $M$ and $f\coloneqq \overline{\F}_q(x)$. Fix a place $R$ of $M$ lying above the place $P_0$ of $\overline{\F}_q(t)$. The natural action of $C_4 \cong D(R|P_0) \subseteq G$ on the set of roots of $f(X) - t$ yields orbits of sizes $4$ and $1$, so there must be two places $Q_0$ and $Q_1$ of $F$ lying over $P_0$ with ramification indices $e(Q_0|P_0) = 4$ and $e(Q_1|P_0) = 1$ by \cite[Lemma 2.1]{BM}. Let $R_0$ be a place of $M$ lying over $Q_0$. We have just seen that $4 = |D(R_0|P_0)| = e(R_0|P_0) = e(R_0|Q_0)\cdot e(Q_0|P_0)$, so it follows that $e(R_0|Q_0) = 1$.

Before proceeding, we introduce the following notation: given a function field $K$ and a place $P$ of $K$, we will write $\hat K_P$ to denote the completion of $K$ at $P$ with respect to the $P$-adic metric. In particular, $\hat F_{Q_0} = \overline{\mathbb{F}}_q((x))$ and $\hat{\overline{\mathbb{F}}}_q(t)_{P_0} = \overline{\mathbb{F}}_q((t))$. Using a well-known number theoretical fact (see for example \cite[Proposition II.9.6]{neukirch}), we have $\textup{Gal}(\hat M_{R_0} \mid \overline{\mathbb{F}}_q((t))) \cong D(R_0|P_0) \cong C_4$. Observe that $\hat M_{R_0} \supseteq \hat F_{Q_0} \supseteq \overline{\mathbb{F}}_q((t))$, so because $[\hat F_{Q_0} : \overline{\mathbb{F}}_q((t))] = e(Q_0|P) = 4$ and $[\hat M_{R_0} : \overline{\mathbb{F}}_q((t))] = e(R_0|P_0) = 4$ we have $\hat M_{R_0} = \hat F_{Q_0}$. In particular, $\hat F_{Q_0} / \overline{\mathbb{F}}_q((t))$ is a Galois extension. Denoting the local Galois group $\textup{Gal}(\hat F_{Q_0} \mid \overline{\mathbb{F}}_q((t)))$ by $\hat G$, we have $\hat G \cong C_4$.

The above shows that every root of $f(X) - t$ in $\hat M_{R_0}$ can be expressed as an element of $\hat F_{Q_0} = \overline{\mathbb{F}}_q((x))$, that is, as a Laurent series in $x$. We proceed by showing that if $z \not = x$ is any other root of $f(X) - t$, then we can write $z = x + ux^i$ for some $i \geq 2$ and $u \in \overline{\mathbb{F}}_q[[x]]$. First, recall that $f(X) - t = X^4(X-b) - t$ so that $b$ is a simple root of $f(X)$. Then by Hensel's lifting lemma there is some $\bar{b} \in \overline{\mathbb{F}}_q((t))$ such that we can write $f(X) - t = \bar f(X)(X-\bar b)$ over $\overline{\mathbb{F}}_q((t))$, where $\bar f(X) \in \overline{\mathbb{F}}_q((t))[X]$ and $\deg \bar f = 4$. Further, the polynomial $\bar f$ must be irreducible over $\overline{\mathbb{F}}_q((t))$ since otherwise we could write $\bar f(X) = \bar g(X) (X- r)$ for an irreducible $\bar g \in \overline{\mathbb{F}}_q((t))[X]$ and some $r \in \overline{\mathbb{F}}_q((t))$, or we could write $\bar f(X) = \bar h_1(X) \bar h_2(X)$ for two irreducible quadratic polynomials $\bar h_1, \bar h_2 \in \overline{\mathbb{F}}_q((t))[X]$ having distinct roots in $F_{Q_0}$ (since $f - t$ is separable). The former factorization implies $| \hat G | = 4 \mid 3!$, a clear contradiction, so assume the latter factorization holds and let $\hat H_1$ and $\hat H_2$ be the splitting fields of $\bar h_1(X)$ and $\bar h_2(X)$, respectively, in $F_{Q_0}$. Then $F_{Q_0} = H_1H_2$ so that $C_4 \cong \hat G = \textup{Gal}(H_1H_2 \mid \overline{\mathbb{F}}_q((t))) \subseteq \textup{Gal}(H_1 \mid \overline{\mathbb{F}}_q((t))) \times \textup{Gal}(H_2 \mid \overline{\mathbb{F}}_q((t))) \cong C_2 \times C_2$, another contradiction. Thus we conclude that $\hat G$ is the Galois group of $\bar f(X)$ over $\overline{\mathbb{F}}_q((t))$; in particular, we have that $\hat G$ acts transitively on the roots of $\bar f(X)$ in $F_{Q_0}$ and hence there is some automorphism $\tau \in \hat G$ of $M$ satisfying $\tau(x) = z$.

Observe that $\langle x \rangle$ is the unique maximal ideal of the ring $\overline{\mathbb{F}}_q[[x]]$, so since $\tau$ is an automorphism of $\overline{\mathbb{F}}_q((x))$ (and hence $\tau$ preserves maximal ideals) we must have $\langle \tau(x) \rangle = \tau \left( \langle x \rangle \right) = \langle x \rangle$. Then $z \equiv 0 \mod{\langle x \rangle}$ if and only if $\tau(x) \equiv 0 \mod{\langle \tau(x) \rangle}$, and the latter clearly holds. This allows us to write $z = cx + ux^i$ for some $c \in \overline{\mathbb{F}}_q$ and for some $i \geq 1$, and we can assume further that $i \geq 2$ since otherwise we could replace $cx$ by $c'x$ for an appropriate $c' \in \overline{\mathbb{F}}_q^*$ so that this holds. Now computing $\tau^4(x)$ by using $\tau(x) = cx + ux^i$ and comparing coefficients with $\tau^4(x) = x$ yields $c^4 = 1$ and hence $c = 1$ as we are working over a field with characteristic $2$. Thus we can write $z = x + ux^i$ for some $i \geq 2$ and some $u\in \overline{\F}_q[[x]]^*$.

Observe the following:
\begin{align*}
f(z) - t & = z^4 (z - b) - t \\
& = \left( x + ux^i \right)^4 (x + ux^i - b) - t \\
& = \left( x^4 + u^4x^{4i} \right) \left( x - b + ux^i \right) - t \\
& = x^4(x - b) - t + ux^{4+i} + u^4x^{4i}(x - b) + u^5x^{5i} \\
& = ux^{4+i} + u^4x^{4i}(x - b) + u^5x^{5i},
\end{align*}
where the last equality holds since $x$ is a root of $f(X) - t$. Let $A = ux^{4+i} + u^4x^{4i}(x - b) + u^5x^{5i} = f(z) - t$. Since $z$ was chosen to be another root of $f(X) - t$, we must have $A = 0$. But $\upsilon_{Q_0}(A) = \upsilon_{Q_0}(x) \cdot \min\limits_{i \geq 2} \{4+i, 4i, 5i\} = 4+i \not = \infty = \upsilon_{Q_0}(0)$, a contradiction since $u \not = 0$. Thus our initial assumption was false, so $G(f) \not \cong \textup{AGL}_1(\mathbb{F}_5)$.
\end{proof}

\section{Degree $5$: if $D_5$ or $A_5$ occurs, then $5\mid q^2-1$}

Assume $q$ is a prime power and $f\in \F_q[X]$ is a separable polynomial of degree $5$. Let $t$ be transcendental over $\F_q$ and let $M$ be the splitting field of $f(X)-t$. Let $A(f)$ and $G(f)$ be the arithmetic and geometric monodromy groups of $f$, respectively.

\begin{theorem}\label{A5}
Suppose that $5\nmid q$ and $A(f)\leq A_5$. Then $5\mid q^2-1$. In particular, if $A(f)\leq D_5$, then $q^2-1$.
\end{theorem}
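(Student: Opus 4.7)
The plan is to exploit total ramification at infinity together with the structure of tame cyclic inertia at that place.

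First, I would observe that the transitive subgroups of $A_5$ of degree $5$ are precisely $C_5$, $D_5$, and $A_5$ ($\AGL_1(\F_5)$ is not contained in $A_5$, so Theorem \ref{AGL} is not even needed here), and each of these contains a $5$-cycle. Since $\deg f=5$ and $t=f(x)$, a standard valuation argument shows that the infinite place $P_\infty$ of $\F_q(t)$ is totally ramified in $F=\F_q(x)$ with ramification index $5$. Hence, for any place $R$ of the splitting field $M$ over $\F_q(t)$ lying above $P_\infty$, the inertia group $I:=I(R|P_\infty)\leq A(f)$ has order divisible by $5$. Tameness (because $5\nmid q$) forces $I$ to be cyclic; since the cyclic subgroups of $A_5$ have orders $1,2,3,5$, we conclude $I\cong C_5$. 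Furthermore $I$ is contained in $G(f)$: elements of $I$ act trivially on the residue field of $R$, hence on the constant subfield $\F_q':=M\cap\overline{\F}_q$ (which embeds in $\kappa(R)$), and $G(f)$ is precisely the subgroup of $A(f)$ fixing $\F_q'$.

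Next I would invoke the standard fact that tame cyclic inertia embeds into the roots of unity in the residue field: the map $\sigma\mapsto \sigma(\pi)/\pi\bmod \pi$ (for any uniformizer $\pi$ at $R$) gives an injective homomorphism $I\hookrightarrow \mu_5(\kappa(R))$, whose image has order $5$. In particular $\F_q(\zeta_5)\subseteq \kappa(R)$. Moreover, under the identifications $D/I\cong\gal(\kappa(R)/\F_q)$ (where $D:=D(R|P_\infty)$) and $I\cong \mu_5(\kappa(R))$, the conjugation action of $D/I$ on $I$ coincides with the natural Galois action of $\gal(\kappa(R)/\F_q)$ on $\mu_5(\kappa(R))$. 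Hence the image of the resulting map $D/I\to\aut(I)=(\Z/5\Z)^*$ is exactly $\gal(\F_q(\zeta_5)/\F_q)$, of order $[\F_q(\zeta_5):\F_q]$.

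To close the argument, I would bound this image group-theoretically using the hypothesis $A(f)\leq A_5$. Since $I\lhd D$ we have $D\leq N_{A(f)}(I)\leq N_{A_5}(I)=D_5$, and the centralizer of a $5$-cycle in $A_5$ equals $I$, so $N_{A_5}(I)/I$ embeds faithfully in $\aut(I)$ with order exactly $2$. Therefore the image of $D/I$ in $\aut(I)$ has order at most $2$, and combined with the previous paragraph this yields $[\F_q(\zeta_5):\F_q]\leq 2$, i.e.\ $5\mid q^2-1$. The ``in particular'' statement is immediate since $D_5\leq A_5$.

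The main obstacle I expect is the careful bookkeeping in the second step: verifying that the arithmetic inertia at $P_\infty$ actually lives in $G(f)$ (not merely in $A(f)$) and that the conjugation action of $D/I$ on $I$ matches the Frobenius action on $\mu_5(\kappa(R))$. These identifications are standard but need to be laid out carefully; once set up, the group-theoretic conclusion is immediate.
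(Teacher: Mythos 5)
Your proof is correct, and it takes a genuinely different route from the paper's. The paper argues via discriminants: for odd $q$, the discriminant of $f(X)-t$ is a degree-$4$ polynomial in $t$ with leading coefficient $5^5$, so if it is a square in $\F_q(t)$ then $5$ must be a square in $\F_q$, which forces $5\mid q^2-1$; for $q=2^n$ it replaces the discriminant with the Berlekamp discriminant and checks by an explicit degree/coefficient count that $A(f)\leq A_5$ forces $2\mid n$. Your argument instead exploits ramification at infinity: total ramification of $P_\infty$ in $\F_q(x)$ together with tameness forces $I(R|P_\infty)\cong C_5$, the tame character puts $\zeta_5$ in $\kappa(R)$, and $D(R|P_\infty)\leq N_{A_5}(C_5)=D_5$ bounds $[\kappa(R):\F_q]=|D/I|\leq 2$, giving $[\F_q(\zeta_5):\F_q]\leq 2$ directly. (In fact you could skip the detour through $\aut(I)$: once you know $\F_q(\zeta_5)\subseteq\kappa(R)$ and $[\kappa(R):\F_q]=|D/I|\leq 2$ you are done; the equivariance of the tame character, while true, is not needed. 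Likewise the observation that $I\subseteq G(f)$ plays no role in the final chain of inequalities.) What your approach buys is uniformity: a single ramification-theoretic argument handles all characteristics prime to $5$, avoiding the case split and the computation of the Berlekamp discriminant that the paper needs in characteristic $2$. What the paper's approach buys is that it is more elementary and concrete — it only needs the leading coefficient of a discriminant — and it sidesteps the need to set up local fields, tame inertia characters, and their Frobenius-equivariance.
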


\begin{proof}

The second assertion follows immediately from the fact that the transitive copy of $D_5$ inside $S_5$ lies inside $A_5$.

If $q$ is odd, just use the fact that $A(f)\leq A_5$ if and only if the discriminant of $f(X)-t$ is a square in $\F_q(t)$. When $f(X)$ is monic of degree 5, the discriminant has the form $5^5t^4+\sum_{i=0}^3a_it^i$. Hence $5$ needs to be a square in $\F_q$; this implies that either $q$ is an even power of a prime $p$ or, by quadratic reciprocity, that $q\equiv \pm 1\bmod 5$. In any case, $5\mid q^2-1$.

If $q=2^n$ for some $n\geq 1$, one needs to use the characteristic two analogue of the discrimininant, called \emph{Berlekamp discriminant} (see \cite{berlekamp}). If $k$ is a field of characteristic 2 and $g\in k[x]$ has degree $n$, the Berlekamp discriminant of $g$ is an element $\Delta\in k$, that can be effectively computed using the coefficients of $g$, that has the property that $\gal(g)\leq A_n$ if and only if the polynomial $X^2+X+\Delta$ has a root in $k$.

Now let $f=X^5+aX^4+bX^3+cX^2+dX\in \F_{2^n}[X]$. We will show that if $A(f)\leq A_5$ then $2\mid n$, and consequently $5\mid q^2-1$ once again. One can compute the Berlekamp discriminant $\Delta_f$ of $f-t$, seen as a polynomial over $\F_{2^n}(t)$, and see that this is given by an expression of the form $r(t)/s(t)^2$, where $r,s\in \F_{2^n}(t)$ are two monic polynomials with $\deg r=4$ and $\deg s=2$. Suppose that $A(f)\leq A_5$, and hence $X^2+X+\Delta_f$ has a root in $\F_{2^n}(t)$. Then there are coprime polynomials $u(t),v(t)\in \F_{2^n}[t]$, with $v(t)$ monic, such that $\Delta_f=(u(t)^2+u(t)v(t))/v(t)^2$. Therefore if $r,s$ share a common factor, this can only have degree $2$ or $4$, and if it has degree $2$ then it must be of the form $(t+\alpha)^2$ for some $\alpha$. Clearly if they share a factor of degree $4$ then $\Delta_f=r(t)/s(t)^2=1$, so that $X^2+X+1$ has a root in $\F_{2^n}(t)$ and consequently $2\mid n$. Otherwise looking at degrees one sees that it must be $\deg u=\deg v$ and $\deg(u^2+uv)=2\deg v$. If the leading coefficient of $u$ is $\delta\in \F_{2^n}$, these two conditions, together with the fact that $r(t)$ is monic of degree $4$, imply that $\delta^2+\delta+1=0$ and consequently that $2\mid n$.
\end{proof}

\begin{proof}[Proof of Theorem \ref{main_thm}]

First, suppose that $5\mid q(q-1)$. Then by Lemmas \ref{additive_case} and \ref{cyclic_case}, $G_5(q)=5$.

Now suppose that $5\nmid q(q-1)$. Then by Lemma \ref{cyclic_case} we have $G_5(q)>5$. If  $5\mid q+1$, it is known (see for example \cite[Section 3]{cohen}) that degree $5$ Dickson polynomials of the first kind, e.g.\ $f=X^5-5X^3+5X$, satisfy $G(f)=A(f)=D_5$. Hence $G_5(q)=10$.

Finally, suppose that $5\nmid q(q^2-1)$. Then by Lemma \ref{cyclic_case} and Theorems \ref{AGL} and \ref{A5} we cannot have $G_5(q)=5,10,20$ or $60$. Hence $G_5(q)=120$.
\end{proof}

\begin{remark}
Notice that the $q/10$ asymptotic when $5|q+1$ was in fact obtained in \cite{liu2020constructions} using Dickson Polynomials and an independent approach.
\end{remark}

\section{Computational examples}

Let us show with a couple of explicit examples how the number of totally split places compares to the theoretical estimate given by Proposition \ref{split_places}. We pick examples with $G_5(q)=120$; for each of these values of $q$ we pick polynomials $f$ with $G(f)=A(f)\cong S_5$ for $5\nmid q(q^2-1)$. As proved in Theorem \ref{main_thm}, it is not possible to do better for these $q$'s.

In order to construct polynomials with geometric monodromy (and therefore also arithmetic monodromy) $S_5$, one can use the following well-known group theoretical fact (see \cite{gallagher}): if $G\leq S_5$ is a transitive subgroup containing a transposition and a cycle of prime length $\ell>2$, then $G=S_5$. In order to force the geometric monodromy group to contain two such elements it is enough, by ramification arguments, to pick $g(X)\in \F_q[X]$ irreducible of degree 3, and set $f=X^2g(X)$.

\begin{table}[h!]
\subfloat[$f=X^2(X^3+X+1)$]{
\begin{tabular}{c|c|c}

	$\qquad q\qquad$ & $\mathcal T^1_{split}(f)$ & $\lfloor\frac{q}{120}\rfloor$ \\
	\hline
$2^{13} $ &  78&        68\\
$2^{15} $ &  278&       273\\
$2^{17} $ &  1088&      1092\\
$2^{19} $ &  4332&      4369\\
   \hline
\end{tabular}}\qquad\qquad
\subfloat[$f=X^2(X^3-X+1)$]{
\begin{tabular}{c|c|c}

	$\qquad q\qquad$ & $\mathcal T^1_{split}(f)$ & $\lfloor\frac{q}{120}\rfloor$ \\
	\hline
$3^{7}$ &  21&       18\\
$3^{9}$ &  159&     164\\
$3^{11}$ &  1474&      1476\\
$3^{13}$ &  13338&      13286\\
   \hline
\end{tabular}}

\subfloat[singlelinecheck=true][$f=X^2(X^3+X+3)$]{
\begin{tabular}{c|c|c}

	$\qquad q \qquad$ & $\mathcal T^1_{split}(f)$ & $\lfloor\frac{q}{120}\rfloor$ \\
	\hline
$19583$ & 156 & 163\\
19597 & 163 & 163\\
19687 & 155 & 164\\
19753 & 194 & 164\\
19793 & 179  & 164\\
19913 &  189 & 165\\
19927 & 160 & 166\\
19963 & 162 & 166\\
19993 & 156 & 166\\
19997 & 161 & 166\\
   \hline
\end{tabular}}
\end{table}

\bibliographystyle{plain}
\bibliography{bibliography}

\end{document}